\documentclass[10pt,a4paper]{article}

\usepackage[T1]{fontenc}
\usepackage[utf8]{inputenc}
\usepackage[english]{babel}
\usepackage{lmodern}
\usepackage{microtype}

\usepackage{geometry}
\usepackage{amsmath,amssymb,amsthm,mathtools}
\usepackage{bm}

\newtheorem{lemma}{Lemma}
\newtheorem{proposition}{Proposition}
\newtheorem{corollary}{Corollary}
\theoremstyle{definition}
\newtheorem{definition}{Definition}
\theoremstyle{remark}
\newtheorem{remark}{Remark}

\DeclareMathOperator{\CT}{CT}

\usepackage{enumitem}
\usepackage{hyperref}
\usepackage[nameinlink,noabbrev]{cleveref}
\hypersetup{
  colorlinks=true,
  linkcolor=blue,
  citecolor=blue,
  urlcolor=blue
}

\newcommand{\1}{\mathbf{1}}
\newcommand{\PP}{\mathbb{P}}
\newcommand{\EE}{\mathbb{E}}

\title{Pareto-type finite-block optimality for source codes:\
a constrained Markov example}
\author{Stefano Della Fiore\\
\textit{Department of Information Engineering} \\
\textit{University of Brescia}\\
Brescia, Italy \\
stefano.dellafiore@unibs.it
}
\date{}

\begin{document}
\maketitle
 
\begin{abstract}
We study a Pareto-type notion of finite-block optimality for injective source codes, where two codes are compared through the full sequence of expected block lengths. As a concrete and fully analyzable test case, we revisit the four-symbol constrained Markov source introduced by Dalai and Leonardi in their ``meaningful example'' on constrained-source decodability. For each admissible nonempty string $u=x_1^m\in\mathcal A\subset\mathcal X^+$, let
\[
K(u):=-\log_2 \PP(X_1^m=u)
\]
denote its information cost. We construct a canonical injective binary mapping $C:\mathcal A\to\{0,1\}^+$ by ordering admissible strings by increasing $K(u)$, then by length and lexicographic order, and assigning binary strings in shortlex order. For the length-$n$ block $X_1^n$ we prove
\[
\EE[|C(X_1)|]=\tfrac32,
\qquad
\EE[|C(X_1^n)|]<\tfrac32\,n\quad (n\ge 2).
\]
Moreover, for every fixed
\[
0<c<\frac{\sqrt2}{18\sqrt\pi}
\]
we have
\[
\EE[|C(X_1^n)|]\le \tfrac32\,n-\frac{c}{\sqrt n}
\]
for all sufficiently large $n$. Thus, for this source, the reversible Dalai--Leonardi code is not Pareto-optimal with respect to finite-block average length. The proof is based on an exact enumeration of admissible strings by information cost and on a shortlex gap identity implying that each cost class splits evenly between lengths $K(u)-1$ and $K(u)$. The example is simple, but it already exhibits the kind of finite-block Pareto comparison that seems natural for injective source coding under source constraints.
\end{abstract}

\section{Introduction}
Shannon's work on noiseless coding already included sources with memory and finite-state structure \cite{Shannon48}; see also the historical discussion in \cite{Verdu98}. In the classical source-independent setting, Kraft's inequality and McMillan's theorem lead to the standard lower bound of expected code length by entropy \cite{Kraft49,McMillan56,CoverThomas,Gallager68}. For constrained sources, however, one must distinguish between different notions of decodability. A binary assignment may fail to be uniquely decodable in the usual concatenation-based sense and yet still be perfectly invertible on the set of strings that the source can actually generate. Dalai and Leonardi made this point explicit in \cite{DalaiLeonardi}.

Once one adopts injectivity on the admissible source language as the basic requirement, a natural finite-block comparison problem emerges. Given two injective source codes $C$ and $\widetilde C$ for the same source, one may compare them through the entire sequence
\[
\bigl(\EE[|C(X_1^n)|]\bigr)_{n\ge1},
\]
and ask whether one code dominates the other in a Pareto-type sense: does it satisfy
\[
\EE[|C(X_1^n)|]\le \EE[|\widetilde C(X_1^n)|]\qquad\text{for every }n\ge1,
\]
with strict inequality for at least one blocklength? This viewpoint is especially natural in constrained-source coding, where reversibility depends on the source law and where finite-block effects need not be visible from entropy alone.

The four-symbol Markov source exhibited by Dalai and Leonardi in \cite[Sec.~II]{DalaiLeonardi} provides a particularly clean test case for this question. Their reversible, source-aware binary encoding has expected length exactly $3n/2$ on the first $n$ source symbols, while the block entropy equals $3n/2+1/2$. The example is elementary, but significant: it shows that source constraints can create finite-block gains that are invisible if one insists on a source-independent reading of unique decodability. It also suggests a sharper question: is the Dalai--Leonardi reversible code Pareto-optimal with respect to the sequence of finite-block average lengths, or can one build another injective source code that never performs worse and performs strictly better somewhere?

This question is also related in spirit to the literature on one-to-one codes, where injectivity is studied without requiring the full prefix or uniquely decodable structure used in the classical theory. A useful reference point is the lower bound of Alon and Orlitsky on the expected length of one-to-one codes \cite{AlonOrlitsky94}. Our setting is different because the admissible source language is constrained by a Markov law, but the same general perspective remains relevant: once injectivity rather than source-independent unique decodability becomes the baseline requirement, one can ask for finer finite-block comparisons between reversible constructions.

In this paper we answer the above question for the Dalai--Leonardi source. We define a canonical injective block code on the whole admissible language $\mathcal A$: admissible strings are ordered by increasing information cost
\[
K(u):=-\log_2 \PP(X_1^m=u),
\]
then by length and lexicographic order, and are mapped to nonempty binary strings in shortlex order. We then analyze the expected length of this code on source blocks of length $n$.

Our first result is the uniform comparison
\[
\EE[|C(X_1^n)|]\le \frac32 n\qquad (n\ge1),
\]
with equality only at $n=1$. Thus the code never performs worse, in average length, than the reversible construction of \cite{DalaiLeonardi}, and it is strictly better at every blocklength $n\ge2$. More precisely, we prove the quantitative asymptotic gain
\[
\EE[|C(X_1^n)|]\le \frac32 n-\frac{c}{\sqrt n}
\]
for every fixed $c<\sqrt2/(18\sqrt\pi)$ and all sufficiently large $n$. In particular, the Dalai--Leonardi benchmark is not Pareto-optimal for this source in the finite-block sense above.

The paper is self-contained. We begin by recalling the source and its basic dyadic structure, then turn to the combinatorial organization of admissible strings by information cost. The core of the argument is a shortlex gap identity showing that each cost class splits exactly into two equal parts, one receiving code length $K(u)-1$ and the other receiving code length $K(u)$. This leads to a one-bit saving with probability at least $1/2$ at every blocklength, and with probability strictly larger than $1/2$ for every blocklength $n\ge2$. A refined central estimate then gives the explicit $n^{-1/2}$ improvement stated above.

\section{The Markov source}
We now specialize to the four-symbol constrained Markov source that appears as the ``meaningful example'' in \cite[Sec.~II]{DalaiLeonardi}. Let $\mathcal X=\{A,B,C,D\}$ and consider the first-order Markov source $(X_i)_{i\ge 1}$ with
\[
\PP(X_1=x)=\frac14,\qquad x\in\mathcal X,
\]
and transitions
\[
A\to\{A,C\}\ (1/2,1/2),\qquad
B\to\{B,D\}\ (1/2,1/2),\qquad
C,D\to\mathcal X\ \text{uniform}.
\]
A finite string $x_1^n$ is \emph{admissible} if all transitions $x_i\to x_{i+1}$ are allowed. Let $\mathcal A\subset\mathcal X^+$
be the set of admissible nonempty finite strings.

We begin with a simple structural fact that will be used repeatedly in the sequel.

\begin{lemma}[Uniform marginals]
For every $i\ge 1$ and every $x\in\mathcal X$, $\PP(X_i=x)=1/4$.
\end{lemma}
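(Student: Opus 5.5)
We argue by induction on $i$. The claim is that the uniform distribution $\pi=(1/4,1/4,1/4,1/4)$ on $\mathcal X$ is stationary for the transition matrix $P$ of the source. Since $X_1$ has law $\pi$ by definition, stationarity gives $\PP(X_{i+1}=\cdot)=\PP(X_i=\cdot)P=\pi P=\pi$ for every $i$. Hence the whole statement reduces to checking that $\pi P=\pi$. Because $\pi$ is uniform, this is the same as checking that $P$ is doubly stochastic, i.e.\ that every column of $P$ sums to $1$.

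To verify this, write $P$ in the ordered basis $A,B,C,D$. Its rows are
\[
A:\ (\tfrac12,0,\tfrac12,0),\quad
B:\ (0,\tfrac12,0,\tfrac12),\quad
C:\ (\tfrac14,\tfrac14,\tfrac14,\tfrac14),\quad
D:\ (\tfrac14,\tfrac14,\tfrac14,\tfrac14).
\]
The column sums are then as follows:
\begin{itemize}
\item column $A$: $\tfrac12+0+\tfrac14+\tfrac14=1$;
\item column $B$: $0+\tfrac12+\tfrac14+\tfrac14=1$;
\item column $C$: $\tfrac12+0+\tfrac14+\tfrac14=1$;
\item column $D$: $0+\tfrac12+\tfrac14+\tfrac14=1$.
\end{itemize}
Equivalently, for each $y$ we have
\[
\PP(X_{i+1}=y)=\sum_x \PP(X_i=x)P(x,y)=\tfrac14\sum_x P(x,y)=\tfrac14.
\]

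The induction step is exactly this computation, using the Markov property $\PP(X_{i+1}=y\mid X_i=x)=P(x,y)$ and the law of total probability. There is no real obstacle here. The only care needed is to read the transition rules correctly: $A$ never goes to $B$ or $D$, $B$ never goes to $A$ or $C$, and $C,D$ go uniformly to all four symbols.
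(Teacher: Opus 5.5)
Your proposal is correct and follows essentially the same route as the paper: verify $\pi P=\pi$ for the uniform $\pi$ (your column-sum, i.e.\ doubly-stochastic, check is the same computation as the paper's mass-flow description), then induct from $X_1\sim\pi$. Your column sums are right, so nothing needs fixing.
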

\begin{proof}
The uniform distribution $\pi=(1/4,1/4,1/4,1/4)$ is invariant: from $A$ (resp.\ $B$) mass $1/4$ splits equally to $A,C$ (resp.\ $B,D$),
and from $C,D$ mass $1/2$ is sent uniformly to all four symbols. Hence $\pi P=\pi$. Since $X_1\sim \pi$, induction gives $X_i\sim \pi$ for all $i$.
\end{proof}

The next step is to record the dyadic form of all admissible probabilities and the resulting information-cost formula.

From $A,B$ there are two equiprobable next symbols (cost $1$ bit), from $C,D$ there are four equiprobable next symbols (cost $2$ bits). Hence every admissible block has dyadic probability.

\begin{definition}[Information cost]
For an admissible block $x_1^n$ define
\[
K(x_1^n):=-\log_2 \PP(X_1^n=x_1^n)\in\mathbb Z_{\ge 0}.
\]
\end{definition}

For the present source, letting
\[
M(x_1^n):=\#\{\, i\in\{1,\dots,n-1\}: x_i\in\{C,D\}\,\},
\]
we have
\begin{equation}\label{eq:K-formula}
K(x_1^n)=2+\sum_{i=1}^{n-1}\Bigl(\1_{\{x_i\in\{A,B\}\}}+2 \cdot \1_{\{x_i\in\{C,D\}\}}\Bigr)
=(n+1)+M(x_1^n).
\end{equation}

\section{The shortlex code on admissible strings}
We now introduce the canonical injective code studied throughout the paper and collect the combinatorial facts that govern its behavior.

Let $\prec_{\mathrm{bin}}$ be the shortlex order on nonempty binary strings: first by length, then lexicographic. On admissible strings we use the order $\prec_{\mathcal X}$: first by increasing cost $K(\cdot)$, then by increasing length, then lexicographic over $A<B<C<D$.

\begin{definition}[Injective source code on $\mathcal A$]
A mapping $C:\mathcal A\to\{0,1\}^+$ is called \emph{injective} if it is one-to-one on $\mathcal A$.
This guarantees unique one-shot decoding on admissible source strings. We do not claim any concatenation-based
unique decodability in the classical sense.
\end{definition}

\begin{definition}[Shortlex source code]
Enumerate $\mathcal A$ as $(u_1,u_2,\dots)$ in the order $\prec_{\mathcal X}$.
Enumerate nonempty binary strings as $(b_1,b_2,\dots)$ in the order $\prec_{\mathrm{bin}}$.
Define $C(u_j)=b_j$ for $j\ge1$.
\end{definition}

By construction $C$ is injective on $\mathcal A$.

\begin{remark}[Numerical check for small $n$]
By computing the ranking and the shortlex assignment explicitly, one obtains for the Markov source defined in the previous section:
\[
\mathbb E[|C(X_1^2)|]=\frac{23}{8}=2.875<3=\frac32\cdot 2,\qquad
\mathbb E[|C(X_1^3)|]=\frac{71}{16}=4.4375<\frac92=\frac32\cdot 3.
\]
\end{remark}

We next count admissible strings by information cost. This will provide the basic input for understanding how the shortlex assignment behaves inside each cost class.

Let
\[
S_k:=\#\{u\in\mathcal A:\ K(u)=k\},\qquad k\ge 2.
\]

\begin{lemma}[Cost-class recurrence and closed form]\label{lem:Sk}
$S_2=S_3=4$ and for all $k\ge 4$,
\[
S_k=S_{k-1}+2S_{k-2}.
\]
Moreover,
\[
S_k=\frac{2^{k+1}+4(-1)^k}{3}\qquad (k\ge 2).
\]
\end{lemma}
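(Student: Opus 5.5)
Since a string's cost is determined by its last transition, the plan is to refine $S_k$ by the last symbol and run a one-step recursion on these refined counts. Write $a_k$ for the number of admissible $u$ with $K(u)=k$ ending in $\{A,B\}$, and $c_k$ for those ending in $\{C,D\}$; then $S_k=a_k+c_k$. By \eqref{eq:K-formula}, appending a symbol to $u$ raises the cost by $1$ if the last symbol of $u$ lies in $\{A,B\}$ and by $2$ if it lies in $\{C,D\}$. Every admissible string of length $\ge2$ arises uniquely as an admissible string plus one allowed symbol, while length-one strings all have cost $2$. From an $A$- or $B$-ending string the two allowed continuations split as one to $\{A,B\}$ and one to $\{C,D\}$; for example, $A$ goes to $A$ or $C$. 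From a $C$- or $D$-ending string the four continuations split as two to each class. This gives, for $k\ge3$,
\[
a_k=a_{k-1}+2c_{k-2},\qquad c_k=a_{k-1}+2c_{k-2},
\]
with the conventions $c_1=0$, $a_1=0$, and initial values $a_2=c_2=2$.

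It follows that $a_k=c_k$ for all $k\ge2$. For $k=2$ this holds by the initial values, and for $k\ge3$ the two recursions have identical right-hand sides. Hence $S_k=2a_k$, and for $k\ge4$
\[
a_k=a_{k-1}+2a_{k-2}.
\]
Multiplying by $2$ gives $S_k=S_{k-1}+2S_{k-2}$. We also get $a_3=a_2+2c_1=2$, so $S_3=4$, and $S_2=4$ directly.

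The recursion only starts at $k=4$, because $a_3$ uses $c_1=0$ rather than $a_1$, so I will check the indices there. Any string of cost $3$ has length $2$ and $M=0$, so its first symbol is $A$ or $B$. There are exactly four such strings: $AA,AC,BB,BD$. This confirms $S_3=4$.

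The closed form then follows by induction, or from the characteristic roots $2$ and $-1$. I will check the base cases: at $k=2$, $(8+4)/3=4$; at $k=3$, $(16-4)/3=4$. For the inductive step, the proposed formula $(2^{k+1}+4(-1)^k)/3$ is a linear combination of $2^k$ and $(-1)^k$, each of which satisfies $x_k=x_{k-1}+2x_{k-2}$. So the formula satisfies the recurrence and agrees with $S_k$ at the two base cases, hence for all $k\ge2$.

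The only delicate step is setting up the refined recursion carefully: the cost increment must be attributed to the last symbol of the prefix, and strings of length one must be handled separately.
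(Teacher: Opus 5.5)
Your proposal is correct and follows essentially the same route as the paper. You refine $S_k$ by the class of the last symbol, derive the identical recursions $a_k=a_{k-1}+2c_{k-2}=c_k$, conclude $S_k=2a_k$ with $a_k=a_{k-1}+2a_{k-2}$, and solve via the roots $2$ and $-1$ matched to $S_2=S_3=4$; your explicit handling of the boundary (length-one strings, the conventions $a_1=c_1=0$, and the direct check $AA,AC,BB,BD$ at $k=3$) is slightly more careful than the paper, which leaves these initial cases implicit.
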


\begin{proof}
Refine the count by the last symbol. Let $a_k$ be the number of admissible strings of cost $k$ ending in $\{A,B\}$, and $c_k$ those ending in $\{C,D\}$. Then $S_k=a_k+c_k$.

To form a cost-$k$ string ending in $\{A,B\}$:
\begin{itemize}[leftmargin=2em]
\item if the penultimate symbol is in $\{A,B\}$, the last step costs $1$ bit and the last symbol in $\{A,B\}$ is forced (A$\to$A, B$\to$B), so we get $a_{k-1}$ strings;
\item if the penultimate symbol is in $\{C,D\}$, the last step costs $2$ bits and the last symbol can be chosen in $\{A,B\}$ in $2$ ways, so we get $2c_{k-2}$ strings.
\end{itemize}
Hence $a_k=a_{k-1}+2c_{k-2}$. By the same reasoning (A$\to$C and B$\to$D forced; from $C,D$ two choices to land in $\{C,D\}$), also $c_k=a_{k-1}+2c_{k-2}$. With $a_2=c_2=2$, it follows by induction that $a_k=c_k$ for all $k$, and therefore
\[
S_k=2a_k,\qquad a_k=a_{k-1}+2a_{k-2}.
\]
Multiplying by $2$ gives $S_k=S_{k-1}+2S_{k-2}$ for $k\ge 4$.

The characteristic equation $\lambda^2=\lambda+2$ has roots $2$ and $-1$, so $S_k=\alpha 2^k+\beta(-1)^k$.
Matching $S_2=S_3=4$ gives $\alpha=2/3$ and $\beta=4/3$, i.e. $S_k=(2^{k+1}+4(-1)^k)/3$.
\end{proof}

Once the size of each cost class is known, one can compare it with the number of short binary strings available before that class begins. This is the content of the next identity.

Let
\[
T_{<k}:=\sum_{j=2}^{k-1} S_j
\]
be the number of admissible strings with cost $<k$.

\begin{lemma}[Exact gap identity]\label{lem:gap}
For every $k\ge 2$,
\[
(2^k-2)-T_{<k}=\frac{S_k}{2}.
\]
\end{lemma}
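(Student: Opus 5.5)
The plan is to compute $T_{<k}$ in closed form from the explicit formula for $S_j$ in \Cref{lem:Sk}, and then compare it directly with $2^k-2$. No combinatorial argument beyond \Cref{lem:Sk} is needed; the identity is a finite geometric-sum computation. As a sanity check, for $k=2$ we have $T_{<2}=0$ (empty sum), and the identity reads $2^2-2=2=S_2/2$, which holds.

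For $k\ge 3$, I will substitute $S_j=(2^{j+1}+4(-1)^j)/3$ and split the sum into two pieces. The first is
\[
\tfrac13\sum_{j=2}^{k-1}2^{j+1}=\tfrac13\bigl(2^{k+1}-8\bigr).
\]
The second is
\[
\tfrac43\sum_{j=2}^{k-1}(-1)^j .
\]
This alternating sum equals $1$ if $k$ is odd (the last index $k-1$ is even) and $0$ if $k$ is even. Equivalently, $\sum_{j=2}^{k-1}(-1)^j=(1-(-1)^k)/2$. Therefore
\[
T_{<k}=\frac{2^{k+1}-8+2-2(-1)^k}{3}=\frac{2^{k+1}-6-2(-1)^k}{3}.
\]

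It then remains to compute
\[
(2^k-2)-T_{<k}=\frac{3\cdot 2^k-6-2^{k+1}+6+2(-1)^k}{3}=\frac{2^k+2(-1)^k}{3},
\]
and to observe that this is exactly $S_k/2=(2^{k+1}+4(-1)^k)/6$.

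An alternative route is induction on $k$ using the recurrence $S_k=S_{k-1}+2S_{k-2}$. The inductive step amounts to checking $2^k=S_{k-1}/2+S_{k-1}+S_k/2$-type bookkeeping. Since the closed form is already available, the direct computation is cleaner. The only point requiring care, and the main (minor) obstacle, is the parity bookkeeping in the alternating sum together with the edge case $k=2$, where the sum is empty. I would therefore verify the formula for $T_{<k}$ at $k=3$: it gives $T_{<3}=S_2=4$, and indeed $(16-6+2)/3=4$. The gap is then $6-4=2=S_3/2$, consistent with the general formula.
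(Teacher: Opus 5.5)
Your direct summation is correct, including the closed form $T_{<k}=(2^{k+1}-6-2(-1)^k)/3$, which also holds at $k=2$. This is essentially the paper's argument: it uses the same closed form of $S_k$ but telescopes rather than summing, showing $H_k:=(2^k-2)-T_{<k}-S_k/2$ satisfies $H_2=0$ and $H_{k+1}-H_k=2^k-(S_k+S_{k+1})/2=0$, which is your ``alternative route''. One small correction to that aside: its inductive step reduces to $S_k+S_{k+1}=2^{k+1}$, not the identity $2^k=S_{k-1}/2+S_{k-1}+S_k/2$ you wrote, which already fails at $k=4$ ($16\neq 12$).
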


\begin{proof}
Set $G_k:=(2^k-2)-T_{<k}$ and $H_k:=G_k-\frac{S_k}{2}$.
Since $T_{<k+1}=T_{<k}+S_k$,
\[
H_{k+1}-H_k
=2^k-S_k-\frac{S_{k+1}-S_k}{2}.
\]
Using the closed form from \Cref{lem:Sk},
\[
S_k=\frac{2^{k+1}+4(-1)^k}{3},\qquad
S_{k+1}=\frac{2^{k+2}-4(-1)^k}{3},
\]
we obtain $H_{k+1}-H_k=0$. Thus $(H_k)$ is constant for $k\ge2$.
For $k=2$, $T_{<2}=0$ and $S_2=4$, hence $H_2=(2^2-2)-0-4/2=0$. Therefore $H_k=0$ for all $k\ge2$.
\end{proof}

We can now identify the only two possible code lengths inside a fixed cost class and the exact proportion in which they occur.

\begin{proposition}[Length dichotomy]\label{prop:dichotomy}
For any admissible string $u$ with $K(u)=k$,
\[
|C(u)|\in\{k-1,k\}.
\]
Moreover, within each cost class $k$, exactly $S_k/2$ strings receive length $k-1$ and exactly $S_k/2$ receive length $k$.
\end{proposition}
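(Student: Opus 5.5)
The proof is a counting argument on ranks. The shortlex order assigns codewords by rank, so the question is only which ranks the cost-$k$ class occupies and which binary lengths those ranks correspond to.

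\textbf{Rank interval of a cost class.} Since $\prec_{\mathcal X}$ sorts first by cost, the strings of cost $k$ form a contiguous block of consecutive indices in the enumeration $(u_j)$. Exactly $T_{<k}$ admissible strings precede this block, so the cost-$k$ strings are $u_j$ with
\[
T_{<k}<j\le T_{<k}+S_k.
\]
The lengths within the class depend only on this interval. The secondary tie-breaking by length and lexicographic order matters only for which particular strings get which length, not for the counts.

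\textbf{Binary lengths as index intervals.} In $\prec_{\mathrm{bin}}$ there are $2^\ell$ nonempty binary strings of length $\ell$. Hence $b_j$ has length $\ell$ exactly when
\[
2^\ell-2<j\le 2^{\ell+1}-2.
\]
In particular, indices $j\in(2^{k-1}-2,\,2^k-2]$ receive length $k-1$, and indices $j\in(2^k-2,\,2^{k+1}-2]$ receive length $k$.

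\textbf{Locating the class.} By \Cref{lem:gap}, $T_{<k}=2^k-2-S_k/2$. Then $T_{<k}+S_k/2=2^k-2$ and $T_{<k}+S_k=2^k-2+S_k/2$. So the first $S_k/2$ indices of the class lie at or below $2^k-2$, and the remaining $S_k/2$ lie just above it. The quantity $S_k/2=a_k$ is an integer by the proof of \Cref{lem:Sk}.

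It remains to confirm that the interval stays inside the two length ranges above, i.e. that it does not cross $2^{k-1}-2$ from above or $2^{k+1}-2$ from below.

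\emph{Lower end.} We need $T_{<k}\ge 2^{k-1}-2$. Applying \Cref{lem:gap} at $k-1$ gives
\[
T_{<k}=T_{<k-1}+S_{k-1}=2^{k-1}-2+S_{k-1}/2\ge 2^{k-1}-2
\]
for $k\ge3$. For $k=2$ it holds directly, since $T_{<2}=0=2^1-2$.

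\emph{Upper end.} We need $S_k/2\le 2^k$. The closed form gives $S_k/2=(2^k+2(-1)^k)/3\le 2^k$.

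Therefore exactly $S_k/2$ strings of cost $k$ receive length $k-1$ and $S_k/2$ receive length $k$, and every length lies in $\{k-1,k\}$. The main point of care is the boundary bookkeeping: keeping the strict versus non-strict inequalities consistent with the off-by-two index shift in shortlex ranking. The substantive content is already contained in \Cref{lem:gap}.
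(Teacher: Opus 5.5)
Your proof is correct and follows essentially the same route as the paper: both use \Cref{lem:gap} to place the start of cost class $k$ at index $T_{<k}=2^k-2-S_k/2$, and both check that this index lies between the shortlex length thresholds $2^{k-1}-2$ and $2^k-2$. The differences are minor. You get $T_{<k}\ge 2^{k-1}-2$ from the gap identity at $k-1$ rather than from $S_k\le 2^k$, and you explicitly verify $S_k/2\le 2^k$, which guarantees that the second half of the class fits within length $k$; the paper leaves that check implicit.
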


\begin{proof}
Before cost class $k$ starts, exactly $T_{<k}$ admissible strings have been encoded, hence the first $T_{<k}$ binary strings in shortlex have been used.

\smallskip
\noindent\emph{(i) All codewords of length $\le k-2$ are already used.}
There are $2^{k-1}-2$ nonempty binary strings of length $\le k-2$.
By \Cref{lem:gap},
\[
T_{<k}=2^k-2-\frac{S_k}{2}.
\]
Using \Cref{lem:Sk}, for every $k\ge2$ we have $S_k\le 2^k$ (indeed $(2^{k+1}+4)/3\le 2^k$ since $4\le 2^k$).
Thus $T_{<k}\ge 2^{k-1}-2$, so all binary strings of length $\le k-2$ are among the first $T_{<k}$ and are already used.

\smallskip
\noindent\emph{(ii) No codeword of length $\ge k$ is used before class $k$.}
Again by \Cref{lem:gap}, $T_{<k}=2^k-2-\frac{S_k}{2}<2^k-2$, so not all binary strings of length $\le k-1$
have been exhausted; hence no length-$k$ word can have appeared yet in shortlex.

\smallskip
Therefore, when class $k$ begins, the remaining unused codewords of length $\le k-1$ are precisely some of the length-$(k-1)$ codewords, and their number is
\[
(2^k-2)-T_{<k}=\frac{S_k}{2}
\]
by \Cref{lem:gap}. Assigning codewords in shortlex inside the class, exactly $S_k/2$ strings get length $k-1$ and the remaining $S_k/2$ strings get the next available length, namely $k$.
\end{proof}

\section{A one-bit saving bound}
We now turn to the probabilistic part of the argument. The goal of this section is to show that the shortlex construction saves one bit with probability at least $1/2$ at every blocklength, and with strictly larger probability as soon as $n\ge2$.

Fix $n\ge 1$ and define
\[
K_n:=K(X_1^n),\qquad L_n:=|C(X_1^n)|.
\]
By \Cref{prop:dichotomy}, $L_n\in\{K_n-1,K_n\}$. Define the saving indicator
\[
I_n:=\1\{L_n=K_n-1\}.
\]
Then $L_n=K_n-I_n$ and therefore
\begin{equation}\label{eq:Ln-decomp}
\EE[L_n]=\EE[K_n]-\PP(I_n=1).
\end{equation}
For this source,
\[
\EE[K_n]=H(X_1^n)=\frac32 n+\frac12,
\]
so it remains to lower bound the saving probability $\PP(I_n=1)$.

A first simplification is that the random cost parameter can be encoded by a binomial statistic.

Let
\[
X:=M(X_1^n)=\#\{i\in\{1,\dots,n-1\}:X_i\in\{C,D\}\}.
\]
Define $Y_i:=\1\{X_i\in\{C,D\}\}$. For every $x\in\mathcal X$,
\[
\PP(X_{i+1}\in\{C,D\}\mid X_i=x)=\frac12,
\]
hence $\PP(Y_{i+1}=1\mid X_1,\dots,X_i)=1/2$ and $(Y_i)_{i\ge1}$ is i.i.d.\ Bernoulli$(1/2)$. Therefore
\[
X=\sum_{i=1}^{n-1}Y_i\sim \mathrm{Bin}(n-1,\tfrac12),
\qquad
p_n(x):=\PP(X=x)=\binom{n-1}{x}2^{-(n-1)}.
\]
By \eqref{eq:K-formula},
\[
K_n=(n+1)+X.
\]

To translate this observation into a statement about code lengths, we need to count how many admissible strings have a given length and cost.

For integers $k\ge2$ and $\ell\ge1$, define
\[
N(k,\ell):=\#\{x_1^\ell\in\mathcal A:K(x_1^\ell)=k\}.
\]

\begin{lemma}[Closed form for $N(k,\ell)$]\label{lem:Nkl}
For any $\ell\ge1$ and any $j\in\{0,\dots,\ell-1\}$, the number of admissible strings $x_1^\ell$ with
\[
\#\{i\in\{1,\dots,\ell-1\}:x_i\in\{C,D\}\}=j
\]
equals
\[
4\binom{\ell-1}{j}2^j.
\]
Consequently,
\begin{equation}\label{eq:Nkl}
N(k,\ell)=4\binom{\ell-1}{k-\ell-1}2^{k-\ell-1},
\end{equation}
with the convention that this is $0$ when $k-\ell-1\notin\{0,\dots,\ell-1\}$.
\end{lemma}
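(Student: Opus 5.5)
We count via the probability identity and the binomial structure already in hand, though a direct combinatorial count also works and we describe it as the main route. Consider an admissible string $x_1^\ell$ and record its pattern $y_i=\1\{x_i\in\{C,D\}\}$ for $i=1,\dots,\ell$. We claim that for every binary pattern $(y_1,\dots,y_\ell)$ the number of admissible strings realizing it equals $2\cdot\prod_{i=1}^{\ell-1} m_i$. Here $m_i=1$ if $y_i=0$ and $m_i=2$ if $y_i=1$.

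The count comes from building the string left to right. The first symbol has $2$ choices within its class ($\{A,B\}$ or $\{C,D\}$, as dictated by $y_1$). After that, each step $i\to i+1$ depends only on the class of $x_i$. If $x_i\in\{A,B\}$, the next symbol is forced once its class $y_{i+1}$ is fixed: $A\to A$ or $A\to C$, and $B\to B$ or $B\to D$. If $x_i\in\{C,D\}$, all four symbols are allowed, so there are exactly $2$ choices in the prescribed class $y_{i+1}$. This yields the product formula for a fixed pattern.

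Next we sum over patterns. The product depends only on $j=\#\{i\le \ell-1: y_i=1\}$ and equals $2\cdot 2^j$. The number of choices of $(y_1,\dots,y_{\ell-1})$ with exactly $j$ ones is $\binom{\ell-1}{j}$, and $y_\ell$ is free, which contributes a factor $2$. Hence the total is $4\binom{\ell-1}{j}2^j$, as claimed. As a cross-check, the same number follows from the probabilities: every admissible string with parameter $j$ has probability $2^{-(\ell+1+j)}$ by \eqref{eq:K-formula}, and the total probability of that event is $\binom{\ell-1}{j}2^{-(\ell-1)}$ by the binomial law of $X$ established above. Dividing gives $\binom{\ell-1}{j}2^{-(\ell-1)}\cdot 2^{\ell+1+j}=4\binom{\ell-1}{j}2^j$.

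For \eqref{eq:Nkl} we use \eqref{eq:K-formula}: $K(x_1^\ell)=k$ exactly when $M(x_1^\ell)=k-\ell-1$. So we substitute $j=k-\ell-1$, and the count is zero when $j$ falls outside $\{0,\dots,\ell-1\}$.

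The only delicate point is the transition-forcing claim. One must check that from $A$ or $B$, exactly one allowed successor lies in each class, and that from $C$ or $D$, exactly two do. Both follow directly from the transition table.
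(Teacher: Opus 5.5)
Your proof is correct and takes essentially the same route as the paper. Both fix the $\{A,B\}/\{C,D\}$ class pattern and count left to right, using that $A,B$ have exactly one allowed successor in each class while $C,D$ have exactly two, and then sum over the $\binom{\ell-1}{j}$ patterns. Your version is cleaner than the paper's wording in two places: the first symbol gets $2$ choices within its dictated class, and $y_\ell$ is a free bit contributing a factor $2$, which also covers $\ell=1$. Your probabilistic cross-check, dividing $\PP(M(X_1^\ell)=j)=\binom{\ell-1}{j}2^{-(\ell-1)}$ by the common probability $2^{-(\ell+1+j)}$ of each such string, is a valid independent proof as well.
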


\begin{proof}
Fix $\ell\ge1$ and a subset $J\subset\{1,\dots,\ell-1\}$ with $|J|=j$. We count admissible strings whose indices with $x_i\in\{C,D\}$ are exactly $J$.

First choose the class of $x_1$ (either $\{A,B\}$ or $\{C,D\}$): $2$ choices. For each $i\in\{1,\dots,\ell-2\}$:
\begin{itemize}[leftmargin=2em]
\item if $i\notin J$, then $x_i\in\{A,B\}$ and once the class of $x_{i+1}$ is fixed (by whether $i+1\in J$), the transition forces $x_{i+1}$ inside that class;
\item if $i\in J$, then $x_i\in\{C,D\}$ and the next class is again fixed by membership of $i+1$ in $J$, but from $C$ or $D$ there are exactly $2$ admissible choices inside the prescribed class.
\end{itemize}
For the final step $i=\ell-1$, if $\ell-1\notin J$ there are $2$ choices for $x_\ell$ inside the forced class, and if $\ell-1\in J$ there are $4$ choices. In either case, the total number of strings compatible with a fixed $J$ equals $4\cdot 2^j$. Summing over the $\binom{\ell-1}{j}$ possible subsets $J$ gives the first claim.

Finally, since $K(x_1^\ell)=(\ell+1)+j$ by \eqref{eq:K-formula}, we have $j=k-\ell-1$, which yields \eqref{eq:Nkl}.
\end{proof}

For later use, write
\[
U(m,j):=\binom{m-j}{j}2^j,
\qquad
U(m,j)=0\ \text{if } j<0 \text{ or } 2j>m.
\]
Then \eqref{eq:Nkl} can be rewritten as
\[
N(k,\ell)=4U(k-2,k-\ell-1).
\]

We now isolate the number of short codewords that are still available when the shortlex order reaches the length-$n$ slice of a fixed cost class.

Fix $n\ge2$ and a cost class $k$. The number of cost-$k$ strings of length $<n$ is
\[
W_{<n}(k):=\sum_{\ell=1}^{n-1}N(k,\ell).
\]
By \Cref{prop:dichotomy}, within cost class $k$ there are exactly $S_k/2$ available short codewords of length $k-1$. Therefore, when the encoding reaches the length-$n$ slice inside cost class $k$, the number of remaining short codewords equals
\begin{equation}\label{eq:Vnk}
V_n(k):=\frac{S_k}{2}-W_{<n}(k).
\end{equation}

The next lemma packages this observation into an explicit conditional probability formula.

\begin{lemma}[Conditional saving formula]\label{lem:cond}
Let $n\ge2$ and $x\in\{0,\dots,n-1\}$, and set $k:=n+1+x$. Define
\[
g_n(x):=\PP(I_n=1\mid X=x).
\]
Then
\begin{equation}\label{eq:gnV}
g_n(x)=\min\!\left\{1,\max\!\left\{0,\frac{V_n(k)}{N(k,n)}\right\}\right\}.
\end{equation}
Consequently,
\begin{equation}\label{eq:Pin-sum}
\PP(I_n=1)=\sum_{x=0}^{n-1}p_n(x)g_n(x),
\qquad
p_n(x)=\binom{n-1}{x}2^{-(n-1)}.
\end{equation}
\end{lemma}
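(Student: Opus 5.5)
The plan is to compute the conditional law of the admissible string within its (cost, length) slice, and then read off which positions in that slice receive a short codeword.

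First, conditioning on $X=x$ with $k=n+1+x$ is the same as conditioning on $K_n=k$, by \eqref{eq:K-formula}. On this event $X_1^n$ lies in the slice $\{u\in\mathcal A:|u|=n,\ K(u)=k\}$, which has $N(k,n)$ elements. Every element of the slice has the same probability $2^{-k}$. Hence, conditionally on $X=x$, the block $X_1^n$ is uniformly distributed on the slice, and
\[
g_n(x)=\frac{\#\{u:|u|=n,\ K(u)=k,\ |C(u)|=k-1\}}{N(k,n)}.
\]
If $N(k,n)=0$ the event has probability zero, but for $x\in\{0,\dots,n-1\}$ \Cref{lem:Nkl} gives $N(k,n)>0$, so this case does not arise.

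Second, I count the short codewords in the slice using the order $\prec_{\mathcal X}$. Inside cost class $k$, strings are ordered first by length. So all $W_{<n}(k)$ strings of cost $k$ and length $<n$ precede the slice, the slice is a contiguous block of $N(k,n)$ strings, and longer strings come after it. By \Cref{prop:dichotomy} and its proof, codewords are assigned in shortlex order within the class. Consequently the first $S_k/2$ strings of the class get length $k-1$ and the rest get length $k$.

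The slice occupies positions $W_{<n}(k)+1,\dots,W_{<n}(k)+N(k,n)$ within the class. The number of these positions that are at most $S_k/2$ is
\[
\min\{N(k,n),\max\{0,S_k/2-W_{<n}(k)\}\}=\min\{N(k,n),\max\{0,V_n(k)\}\},
\]
using \eqref{eq:Vnk}. Dividing by $N(k,n)>0$ and moving the division inside the min and max gives \eqref{eq:gnV}.

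Finally, \eqref{eq:Pin-sum} follows from the law of total probability, using $X\sim\mathrm{Bin}(n-1,1/2)$ with the stated $p_n(x)$.

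The main point needing care is the first step, namely that the conditional law is uniform on the slice; the dyadic formula \eqref{eq:K-formula} gives it immediately. The remaining steps are bookkeeping about positions in the enumeration, the only subtlety being the clipping by $0$ and $1$.
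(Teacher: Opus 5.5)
Your proposal is correct and follows essentially the same route as the paper: condition on $X=x$ to isolate the cost-$k$, length-$n$ slice, note that the $W_{<n}(k)$ shorter strings of the class precede it in $\prec_{\mathcal X}$, and clip the remaining $V_n(k)$ short codewords between $0$ and $N(k,n)$. You also state explicitly that the conditional law is uniform on the slice (every element has probability $2^{-k}$) and check that $N(k,n)>0$; the paper uses both facts only implicitly when it turns the count of short codewords into a conditional probability.
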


\begin{proof}
Condition on $X=x$. Then $K_n=(n+1)+x=k$. All admissible strings in this event form exactly the length-$n$ slice of cost $k$, whose cardinality is $N(k,n)$. Within the whole cost class $k$, exactly $S_k/2$ strings receive short codewords (length $k-1$). Since the order $\prec_{\mathcal X}$ lists shorter source lengths first, by the time we begin encoding the length-$n$ slice we have already encoded exactly the $W_{<n}(k)$ strings of lengths $\ell<n$ in the same cost class. Hence the number of short codewords still available for the length-$n$ slice is $V_n(k)$ as in \eqref{eq:Vnk}.

If $V_n(k)\le0$, no short codeword remains, so $g_n(x)=0$. If $V_n(k)\ge N(k,n)$, there are enough short codewords to cover the entire slice, so $g_n(x)=1$. Otherwise $0<V_n(k)<N(k,n)$, and exactly $V_n(k)$ out of the $N(k,n)$ strings in the slice get a short codeword, so $g_n(x)=V_n(k)/N(k,n)$. This is exactly \eqref{eq:gnV}. Averaging over $X\sim\mathrm{Bin}(n-1,1/2)$ gives \eqref{eq:Pin-sum}.
\end{proof}

The rest of the proof separates two regimes: a right-tail regime where all strings are saved, and a central regime where one needs a more delicate oscillatory estimate.

\begin{lemma}[Tail saturation]\label{lem:tail}
Let $n\ge2$ and let $x\in\{0,\dots,n-1\}$. If
\[
x\ge \frac{n+1}{2},
\]
then
\[
g_n(x)=1.
\]
\end{lemma}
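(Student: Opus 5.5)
The plan is to turn the claim $g_n(x)=1$ into a purely combinatorial inequality inside the cost class $k=n+1+x$, and then prove that inequality by comparing the terms $U(k-2,j)$.

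\emph{Reduction.} By \Cref{lem:cond}, it suffices to show $V_n(k)\ge N(k,n)$. Write $W_{\le n}(k):=W_{<n}(k)+N(k,n)$ and $W_{>n}(k):=\sum_{\ell>n}N(k,\ell)$, so that $S_k=W_{\le n}(k)+W_{>n}(k)$. Then \eqref{eq:Vnk} gives
\[
V_n(k)-N(k,n)=\frac{S_k}{2}-W_{\le n}(k)=\frac{W_{>n}(k)-W_{\le n}(k)}{2}.
\]
So the lemma is equivalent to
\[
W_{\le n}(k)\le W_{>n}(k).
\]
In words: among cost-$k$ strings, at most half have length $\le n$.

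\emph{Reindexing.} Use $N(k,\ell)=4U(m,j)$ with $m:=k-2=n-1+x$ and $j:=k-\ell-1$. The condition $\ell\le n$ becomes $j\ge k-n-1=x$. The hypothesis $x\ge (n+1)/2$ reads $3x\ge m+2$. The inequality to prove is therefore
\[
\sum_{j\ge t}U(m,j)\le \sum_{j<t}U(m,j),\qquad t:=x\ge \frac{m+2}{3}.
\]
The right side only shrinks as $t$ decreases, so it suffices to treat the smallest admissible threshold $t_0=\lceil (m+2)/3\rceil$.

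\emph{Heuristic.} $\sum_j U(m,j)=S_k/4$ counts weighted compositions of $m$ into parts $1$ and $2$, where each part $2$ has weight $2$. Under the associated renewal measure, each part is $1$ or $2$ with probability $1/2$, so the mean part is $3/2$. This puts the typical number of $2$'s at $j\approx m/3$, while $t_0$ lies just above that mean. The claim is thus a median-type statement: $j$ has a median at most $(m+2)/3$.

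\emph{Proof of the inequality.} I would compute the consecutive ratio
\[
\rho_j:=\frac{U(m,j+1)}{U(m,j)}=\frac{2(m-2j)(m-2j-1)}{(j+1)(m-j)},
\]
which is strictly decreasing in $j$, so $U(m,\cdot)$ is log-concave. I would then locate the mode $j^*$ (where $\rho_j$ crosses $1$) and check that $j^*\le t_0-1$. The intended mechanism is a pairing that maps $j_0+r\mapsto j_0-1-r$ around a suitable $j_0\le t_0$ and shows that the upper term is dominated by the lower one. For this I would compare products of ratios $\prod\rho$ on both sides, using $\rho_{j_0+s}\rho_{j_0-1-s}\le 1$.

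\emph{Fallback and main obstacle.} The main obstacle is this last step. The threshold sits only $O(1)$ above the mean, and the tail is asymmetric, so a crude bound will not do. The log-concave pairing must exploit the skewness of $U(m,\cdot)$: it decays faster to the right, since $\rho_j\to0$ as $2j\to m$. If the pairing fails for small $m$, I would verify those cases directly, for instance $n\le 10$ via the explicit closed forms. As a backup for general $m$, I would use the exact generating function $\sum_{m,j}U(m,j)y^jz^m=1/(1-z-2yz^2)$ to compute the partial sums explicitly.
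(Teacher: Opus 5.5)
Your reduction is correct and matches the paper's. Put $y:=n-1-x$ and $m:=k-2=2x+y$. Then the hypothesis reads $x\ge y+2$, $U(m,\cdot)$ is supported on $\{0,\dots,x+\lfloor y/2\rfloor\}$, and $g_n(x)=1$ is equivalent to $\sum_{j\ge x}U(m,j)\le\sum_{j<x}U(m,j)$. Your bookkeeping, with $U(m,x)$ on the heavy side, is the accurate one. The paper's proof writes $m=x+2y$ and drops that term from its displayed identity, but its termwise pairing includes $r=0$ and so proves your form. Your pairing $x+r\mapsto x-1-r$ is also exactly the paper's.

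\textbf{The gap.} The step that carries all the content, $U(m,x+r)\le U(m,x-1-r)$ for $0\le r\le\lfloor y/2\rfloor$, is never proved. Moreover, the inequality you plan to use for it is misindexed. Telescoping gives
\[
\frac{U(m,x+r)}{U(m,x-1-r)}=\rho_{x-1}\prod_{s=0}^{r-1}\rho_{x+s}\,\rho_{x-2-s}.
\]
So what you need is $\rho_{x-1}<1$ together with $\rho_{x+s}\rho_{x-2-s}\le 1$, i.e.\ pairs symmetric about $x-1$. Your condition $\rho_{x+s}\rho_{x-1-s}\le 1$ instead telescopes to $U(m,x+r)\le U(m,x-r)$. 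Since $\rho$ is decreasing, it is weaker than what is required and does not imply it.

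Log-concavity plus the location of the mode cannot replace this step. A log-concave sequence whose mode is at $x-1$ may still put most of its mass on $\{j\ge x\}$. The hypothesis $x\ge y+2$ therefore has to enter every pair product quantitatively, not just the $r=0$ comparison.

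\textbf{How the paper closes it.} Set $R_r:=U(m,x-1-r)/U(m,x+r)$.
\begin{itemize}
\item First, $R_0=1/\rho_{x-1}=\frac{x(x+y+1)}{2(y+1)(y+2)}>1$ because $x\ge y+2$.
\item Next, $R_{r+1}/R_r=1/(\rho_{x+r}\rho_{x-2-r})$ is an explicit rational function, increasing in $x$.
\item At $x=y+2$ it splits into two factors, each exceeding $1$ by the identities $(y-r+1)(y+r+3)-(y-2r)(y+2r+4)=3(r+1)^2$ and $(2y+2-r)(2y+r+4)-4(y-2r-1)(y+2r+3)=15r^2+30r+4y+20$.
\item Hence $R_r\ge R_0>1$ for all relevant $r$.
\item All indices $x-1-r$ are nonnegative, so summing $U(m,x-1-r)\ge U(m,x+r)$ over $r=0,\dots,\lfloor y/2\rfloor$ yields the claim.
\end{itemize}

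\textbf{Your fallbacks.} They do not close the gap. Direct checks cover only finitely many $n$. Extracting partial sums at a threshold near $m/3$ from the bivariate generating function is a diagonal extraction of the kind that needs Lagrange inversion in \Cref{lem:central}, and a sign analysis would still be required afterwards.
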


\begin{proof}
Fix $n\ge2$ and $x\ge (n+1)/2$, and set
\[
y:=n-1-x,
\qquad
k:=n+1+x,
\qquad
m:=k-2=n+x-1=x+2y.
\]
Using $N(k,\ell)=4U(k-2,k-\ell-1)$, we obtain
\[
N(k,n)=4U(m,x),
\qquad
W_{<n}(k)=4\sum_{j\ge x+1}U(m,j),
\qquad
S_k=4\sum_{j\ge0}U(m,j).
\]
Hence
\[
V_n(k)-N(k,n)
=2\left(\sum_{j=0}^{x-1}U(m,j)-\sum_{j\ge x+1}U(m,j)\right).
\]
Therefore it suffices to prove
\begin{equation}\label{eq:tail-target-v2}
\sum_{j=0}^{x-1}U(m,j)\ge \sum_{j\ge x+1}U(m,j).
\end{equation}

Since $m=x+2y$, the support of $j\mapsto U(m,j)$ is
\[
0\le j\le \Bigl\lfloor\frac m2\Bigr\rfloor=x+\Bigl\lfloor\frac y2\Bigr\rfloor,
\]
so the right-hand side of \eqref{eq:tail-target-v2} equals
\[
\sum_{r=1}^{\lfloor y/2\rfloor}U(m,x+r).
\]
We compare these terms with the left tail via
\[
R_r:=\frac{U(m,x-1-r)}{U(m,x+r)}
\qquad
\Bigl(0\le r\le \Bigl\lfloor\frac y2\Bigr\rfloor\Bigr).
\]
First,
\[
R_0=\frac{U(m,x-1)}{U(m,x)}=\frac{x(x+y+1)}{2(y+2)(y+1)}>1,
\]
because $x\ge y+2$.
Next, for $0\le r<\lfloor y/2\rfloor$, a direct computation gives
\[
\frac{R_{r+1}}{R_r}
=
\frac{(x+y-r)(x-r-1)(x+r+1)(x+y+r+2)}
{4(y-2r)(y-2r-1)(y+2r+3)(y+2r+4)}.
\]
The right-hand side is increasing in $x$, so it is bounded below by its value at the smallest admissible value $x=y+2$. Therefore
\[
\frac{R_{r+1}}{R_r}
\ge
\frac{(2y+2-r)(y-r+1)(y+r+3)(2y+r+4)}
{4(y-2r)(y-2r-1)(y+2r+3)(y+2r+4)}.
\]
Split this lower bound as
\[
\frac{(y-r+1)(y+r+3)}{(y-2r)(y+2r+4)}
\cdot
\frac{(2y+2-r)(2y+r+4)}{4(y-2r-1)(y+2r+3)}.
\]
Both factors are strictly larger than $1$, because
\[
(y-r+1)(y+r+3)-(y-2r)(y+2r+4)=3(r+1)^2>0,
\]
and
\[
(2y+2-r)(2y+r+4)-4(y-2r-1)(y+2r+3)=15r^2+30r+4y+20>0.
\]
Hence $R_{r+1}>R_r$, and therefore $R_r\ge R_0>1$ for every admissible $r$. Thus
\[
U(m,x-1-r)\ge U(m,x+r)
\qquad
\Bigl(0\le r\le \Bigl\lfloor\frac y2\Bigr\rfloor\Bigr).
\]
Summing over $r$ gives
\[
\sum_{j=0}^{x-1}U(m,j)
\ge
\sum_{r=0}^{\lfloor y/2\rfloor}U(m,x-1-r)
\ge
\sum_{r=1}^{\lfloor y/2\rfloor}U(m,x+r),
\]
which is exactly \eqref{eq:tail-target-v2}. Therefore $V_n(k)\ge N(k,n)$ and \Cref{lem:cond} yields $g_n(x)=1$.
\end{proof}

To handle the remaining values near the center, we introduce a family of alternating sums that captures the residual imbalance inside the relevant cost classes.

For $t\ge1$, define
\[
C_t:=U(3t,t)=\binom{2t}{t}2^t,
\]
and
\[
D_t:=\sum_{j=0}^{t-1}U(3t,j)-\sum_{j\ge t+1}U(3t,j).
\]

\begin{lemma}[Alternating-sum estimates]\label{lem:central}
For all $t\ge1$,
\[
D_t=C_{t-1}-C_{t-2}+C_{t-3}-\cdots+(-1)^{t-1}C_0,
\]
and in particular
\[
0<D_t<\frac12 C_t.
\]
Moreover, if
\[
B_t:=\sum_{j=0}^{t-1}U(3t-1,j)-\sum_{j\ge t+1}U(3t-1,j),
\]
then
\[
B_t=\frac12 C_t-D_t,
\qquad
0<B_t<\frac12 C_t.
\]
Finally,
\[
D(z):=\sum_{t\ge1}D_tz^t=\frac{z}{(1+z)\sqrt{1-8z}}.
\]
\end{lemma}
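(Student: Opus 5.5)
We prove the generating function identity for $D(z)$ first and read everything else off from it. Put
\[
\Phi(z):=\sum_{t\ge0}C_tz^t=\sum_{t\ge0}\binom{2t}{t}(2z)^t=\frac{1}{\sqrt{1-8z}} .
\]
If $D(z)=z/((1+z)\sqrt{1-8z})$, then $(1+z)D(z)=z\Phi(z)$. Comparing coefficients of $z^t$ gives $D_t+D_{t-1}=C_{t-1}$ for $t\ge1$, with $D_0:=0$. Unrolling this recursion gives the alternating formula $D_t=C_{t-1}-C_{t-2}+\cdots+(-1)^{t-1}C_0$.

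For the bounds we use that $C_t$ is increasing, since $C_t/C_{t-1}=2\cdot\frac{2t(2t-1)}{t^2}=\frac{4(2t-1)}{t}\ge4$ for $t\ge1$. Group the alternating sum in pairs from the top. Each pair is positive, and if $t$ is odd an extra $C_0>0$ remains, so $D_t>0$. Grouping after the first term instead shows $D_t\le C_{t-1}\le C_t/4<C_t/2$.

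To get $D(z)$, we work with the bivariate generating function. For fixed $j$,
\[
\sum_m U(m,j)x^m=\frac{(2x^2)^j}{(1-x)^{j+1}},
\qquad
\sum_{m,j}U(m,j)x^my^j=\frac{1}{1-x-2yx^2}.
\]
We rewrite $D_t$ as (total row sum) $-\,C_t\,-\,2\sum_{j>t}U(3t,j)$. The total row sum is $\sum_jU(3t,j)=(2^{3t+1}+(-1)^{3t})/3$, by the same recurrence as in \Cref{lem:Sk}, which gives the rational generating function $\frac{2/3}{1-8z}+\frac{1/3}{1+z}$. The upper tail is a diagonal-type extraction:
\[
\sum_t z^t\sum_{j>t}U(3t,j)=\sum_t z^t\sum_{j>t}[x^{3t}]\frac{(2x^2)^j}{(1-x)^{j+1}} .
\]
We sum the geometric series in $j$ first. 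Then we extract the coefficient of $x^{3t}$ along the diagonal by the standard residue computation: substitute $x^3=z\,s$ and take the residue in $s$ at the small root of the kernel. The kernel is quadratic, so its small root involves $\sqrt{1-8z}$. The factor $1/(1+z)$ comes from the pole that already appears in the row-sum generating function. Collecting terms should give $z/((1+z)\sqrt{1-8z})$. As a sanity check we compare the first few coefficients: $D_1=C_0=1$ and $D_2=C_1-C_0=3$, both computed directly from the definition of $U$.

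For $B_t$, we apply the same extraction to row $m=3t-1$ and check that $B(z)+D(z)=\tfrac12(\Phi(z)-1)$. An alternative route is the Pascal-type identity $U(m,j)=U(m-1,j)+2U(m-2,j-1)$, which comes from conditioning on the first step in the composition interpretation of $U$. Splitting the two tails of row $3t$ with it expresses $D_t$ in terms of row-$(3t-1)$ and row-$(3t-2)$ tails. Either route gives $B_t=\tfrac12C_t-D_t$. From the bounds on $D_t$ we then get $0<D_t\le C_t/4$, hence $C_t/4\le B_t<C_t/2$. This yields $0<B_t<\tfrac12 C_t$.

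The main obstacle is the diagonal extraction that gives $D(z)$ in closed form. The tail cutoff $j>t$ is tied to the row index $3t$, so the extraction requires careful handling of which roots of the kernel lie inside the contour. If that computation becomes unwieldy, a fallback is to prove the recursion $D_t+D_{t-1}=C_{t-1}$ directly. To do this, we apply the Pascal identity three times to move from row $3t$ to row $3t-3$, and track the boundary terms at $j=t$ and $j=t-1$, which should telescope to $C_{t-1}$.
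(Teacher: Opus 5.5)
Your reduction is fine. Once $D(z)=z/((1+z)\sqrt{1-8z})$ is known, the recursion $D_t+D_{t-1}=C_{t-1}$, the alternating formula and the bounds follow exactly as you say; your ratio $C_t/C_{t-1}=8-4/t\ge4$ even yields $D_t\le C_t/4$. Likewise $B_t=\frac12C_t-D_t$ follows from $B(z)+D(z)=\frac12(\Phi(z)-1)$. The gap is that neither identity is actually derived. ``Collecting terms should give'' is the whole content of the lemma, and matching $D_1,D_2$ proves nothing.

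Your route does work if you carry it out. Summing over $j>t$ first,
\[
Q_t:=\sum_{j>t}U(3t,j)=[x^{3t}]\,A(x)\Bigl(\frac{2x^2}{1-x}\Bigr)^{t},\qquad A(x):=\frac{2x^2}{(1-x)(1-x-2x^2)},
\]
so the diagonal is $\sum_{t\ge0}Q_tz^t=\CT_x\frac{A(x)\,x(1-x)}{x(1-x)-2z}$.
\begin{itemize}
\item Only the kernel root $w=(1-s)/2$, with $s:=\sqrt{1-8z}$, lies near $0$; the poles $1,\frac12,-1$ of $A$ do not. The residue gives $Q(z)=\frac{2w^2}{(1-w-2w^2)(1-2w)}=\frac{(1-s)^2}{s^2(3-s)}$.
\item Subtract $\Phi+2Q$ from the row-sum series $\frac{2}{3s^2}+\frac{1}{3(1+z)}$ and use $8(1+z)=9-s^2$. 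This gives $D(z)=\frac{1-s^2}{s(9-s^2)}=\frac{z}{(1+z)s}$. Note that the factor $1/(1+z)$ does not come from the row sum alone, since $Q$ also carries $1/(3-s)$.
\item For $B$, the row-$(3t-1)$ tail is the same extraction with $A$ replaced by $xA$, so its series is $wQ(z)$.
\item The full sums of rows $3t$ and $3t-1$ add up to $8^t$. Hence $B+D=\frac12(\Phi-1)$ reduces to $2(1+w)Q(z)=(1-s)^2/s^2$, which holds because $1+w=(3-s)/2$.
\end{itemize}
This is the paper's computation in other coordinates. The paper evaluates the lower partial sum $P_t$ by Lagrange inversion with $X=2z(1+X)^2$, which becomes your kernel under $x=X/(1+X)$. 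Your tail version has the small advantage that the geometric series in $j$ is a genuine power series in $x$.

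Your Pascal-identity alternatives do not close as described. One split of row $3t$ gives only $D_t=B_t+2A_t$, with a new unknown $A_t=\sum_{j\le t-2}U(3t-2,j)-\sum_{j\ge t}U(3t-2,j)$; it does not give $B_t=\frac12C_t-D_t$. Descending three rows, as in your fallback, produces row-$(3t-4)$ tails, i.e.\ $B_{t-1}$, and not only boundary terms. For $t\ge2$,
\begin{align*}
D_t&=5D_{t-1}+6B_{t-1}+C_{t-1}+U(3t-3,t)-4U(3t-4,t-2)-4U(3t-4,t-1),\\
B_t&=3D_{t-1}+2B_{t-1}+C_{t-1}+U(3t-3,t).
\end{align*}
So this route for $D_t$ needs information about $B_{t-1}$. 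It works as a joint induction on $D_t+D_{t-1}=C_{t-1}$ and $B_t+D_t=\frac12C_t$, with base case $t=1$. After substituting $B_{t-1}=\frac12C_{t-1}-D_{t-1}$, both claims reduce to
\[
3\binom{2t-2}{t-1}+2\binom{2t-3}{t}=\binom{2t}{t},
\]
a routine Pascal computation. Set up this way, you get a proof with no generating functions at all.
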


\begin{proof}
Set
\[
P_t:=\sum_{j=0}^{t-1}U(3t,j),
\qquad
T_t:=\sum_{j\ge0}U(3t,j),
\]
so that
\begin{equation}\label{eq:D-decomp-v2}
D_t=2P_t+C_t-T_t.
\end{equation}
First, $T_t=S_{3t+2}/4$, so by \Cref{lem:Sk},
\[
T_t=\frac{2^{3t+3}+4(-1)^t}{12}=\frac{2\cdot 8^t+(-1)^t}{3},
\qquad
T(z):=\sum_{t\ge1}T_tz^t=\frac13\left(\frac{16z}{1-8z}-\frac{z}{1+z}\right).
\]
Next,
\[
C(z):=\sum_{t\ge1}C_tz^t=\sum_{t\ge1}\binom{2t}{t}(2z)^t=\frac{1}{\sqrt{1-8z}}-1.
\]

To compute $P(z):=\sum_{t\ge1}P_tz^t$, use constant terms:
\[
U(3t,j)=\CT_x\!\left((1+x)^{3t}\Bigl(\frac{2}{x(1+x)}\Bigr)^j\right).
\]
Summing first over $t$ and then over $j$ gives
\[
P(z)
=
\CT_x\left(
\frac{z\,x(1+x)^3}{(1-z(1+x)^3)\bigl(x-2z(1+x)^2\bigr)}
\right).
\]
To evaluate $P(z)$, we use a standard coefficient form of the Lagrange--B\"urmann inversion formula; see, for instance, \cite[Sec.~2.1]{GesselLagrange} (and also \cite[Sec.~4.1]{GesselLagrange} for its residue interpretation). Let $f=f(u)$ be the unique formal power series satisfying
\[
f=u\,\phi(f),
\qquad
\phi(x):=2z(1+x)^2,
\]
and let
\[
F(x):=\frac{z(1+x)^3}{1-z(1+x)^3}.
\]
Then Lagrange inversion gives, for every $n\ge 0$,
\[
[u^n]\frac{F(f(u))}{1-u\phi'(f(u))}
=
[x^n]\bigl(F(x)\phi(x)^n\bigr).
\]
Summing over $n\ge0$ and using
\[
\sum_{n\ge0}[x^n]\bigl(F(x)\phi(x)^n\bigr)
=
\CT_x\!\left(F(x)\sum_{n\ge0}\Bigl(\frac{\phi(x)}{x}\Bigr)^n\right)
=
\CT_x\!\left(\frac{x}{x-\phi(x)}F(x)\right),
\]
we obtain
\[
\sum_{n\ge0}[u^n]\frac{F(f(u))}{1-u\phi'(f(u))}
=
\CT_x\!\left(\frac{x}{x-\phi(x)}F(x)\right).
\]
Now set $u=1$ and write $X:=f(1)$. Since $f=u\phi(f)$, the series $X$ is characterized by
\[
X=\phi(X),
\qquad X(0)=0.
\]
Therefore
\[
\CT_x\!\left(\frac{x}{x-\phi(x)}F(x)\right)
=
\frac{F(X)}{1-\phi'(X)}.
\]
In our case this gives
\[
P(z)=\frac{F(X)}{1-\phi'(X)},
\qquad
X=\phi(X)=2z(1+X)^2.
\]
Solving the quadratic equation gives
\[
X(z)=\frac{1-4z-\sqrt{1-8z}}{4z},
\]
and substituting into the previous expression yields
\[
P(z)=\frac{1-\sqrt{1-8z}-2z}{2(1+z)(1-8z)}.
\]
Combining with \eqref{eq:D-decomp-v2},
\[
D(z)=2P(z)+C(z)-T(z)=\frac{z}{(1+z)\sqrt{1-8z}}.
\]
Since
\[
\frac1{\sqrt{1-8z}}=\sum_{t\ge0}C_tz^t,
\]
we obtain
\[
D(z)=z\Bigl(\sum_{s\ge0}(-1)^sz^s\Bigr)\Bigl(\sum_{r\ge0}C_rz^r\Bigr),
\]
and coefficient extraction gives
\[
D_t=C_{t-1}-C_{t-2}+C_{t-3}-\cdots+(-1)^{t-1}C_0.
\]
Grouping in pairs shows $D_t>0$ for all $t\ge1$. Also,
\[
\frac{C_t}{C_{t-1}}=2\frac{\binom{2t}{t}}{\binom{2t-2}{t-1}}=8-\frac4t>2,
\]
so $D_t<C_{t-1}<C_t/2$ for $t\ge2$, while $D_1=1<C_1/2=2$.

Finally, the same constant-term computation for $B_t$ gives
\[
B(z):=\sum_{t\ge1}B_tz^t
=
\frac12\left(\frac{1-z}{(1+z)\sqrt{1-8z}}-1\right)
=
\frac12\bigl(C(z)-2D(z)\bigr),
\]
whence $B_t=C_t/2-D_t$. Since $0<D_t<C_t/2$, we conclude that $0<B_t<C_t/2$.
\end{proof}

We are now ready to prove the main probabilistic estimate from which the average-length bounds follow.

\begin{lemma}[Key lemma]\label{lem:key}
For every $n\ge1$,
\[
\PP(I_n=1)\ge \frac12.
\]
Moreover, the inequality is strict for every $n\ge2$.
\end{lemma}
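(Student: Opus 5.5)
The plan is to pair the values of $X$ symmetrically about the center of the binomial and to show that the upper half is always fully saved. We start from \eqref{eq:Pin-sum}, $\PP(I_n=1)=\sum_x p_n(x)g_n(x)$ with $p_n(x)=p_n(n-1-x)$. The case $n=1$ is immediate: $X=0$ and $k=2$, the whole cost class consists of the four length-one strings, and by \Cref{prop:dichotomy} exactly $S_2/2=2$ of them get a short codeword. So $\PP(I_1=1)=1/2$.

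For $n\ge2$, the first step is to show that $g_n(x)=1$ for every $x\ge (n-1)/2$. For $x\ge(n+1)/2$ this is \Cref{lem:tail}. The remaining central values are $x=(n-1)/2$ when $n$ is odd and $x=n/2$ when $n$ is even. For these we reuse the identity from the proof of \Cref{lem:tail}, with $y=n-1-x$ and $m=x+2y$:
\[
V_n(k)-N(k,n)=2\Bigl(\sum_{j<x}U(m,j)-\sum_{j>x}U(m,j)\Bigr).
\]
\begin{itemize}
\item If $n=2t+1$ and $x=t$, then $m=3t$, so the bracket is $D_t$.
\item If $n=2t$ and $x=t$, then $y=t-1$ and $m=3t-1$, so the bracket is $B_t$.
\end{itemize}
By \Cref{lem:central} both are strictly positive, so \Cref{lem:cond} gives $g_n(x)=1$ at these points as well. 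Since $g_n\ge0$, we get $\PP(I_n=1)\ge \PP(X\ge (n-1)/2)\ge 1/2$ by symmetry of $\mathrm{Bin}(n-1,1/2)$.

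Strictness for odd $n$ is then free. Pairing $x$ with $n-1-x$ gives
\[
\PP(I_n=1)\ge \PP\bigl(X>\tfrac{n-1}2\bigr)+p_n\bigl(\tfrac{n-1}2\bigr)=\tfrac12+\tfrac12\,p_n\bigl(\tfrac{n-1}2\bigr)>\tfrac12 .
\]
For even $n=2t$ the upper half $\{X\ge t\}$ has probability exactly $1/2$. We therefore need one value $x<t$ with $g_n(x)>0$, i.e.\ $V_n(k)>0$. The natural candidate is $x=t-1$, where $k=3t$ and $m=3t-2$. Writing $S_k/2=2\sum_j U(m,j)$ and $W_{<n}(k)=4\sum_{j>x}U(m,j)$, positivity of $V_n(k)$ becomes
\[
\sum_{j\le t-1}U(3t-2,j)>\sum_{j\ge t}U(3t-2,j).
\]

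This inequality is the main obstacle. I would first try the constant-term/Lagrange-inversion computation of \Cref{lem:central} for the residue class $m\equiv 1\pmod 3$. The expected outcome is a closed generating function expressing the difference as an alternating combination of the $C_s$ (or of $B_s,D_s$) that is visibly positive. As a fallback I would use the ratio method of \Cref{lem:tail}: compare $U(m,t-1-r)$ with $U(m,t+r)$ and show termwise domination, possibly after extracting the first couple of terms by hand. For $n=2$ one checks directly that $x=0$ gives $V_2(3)=S_3/2-N(3,1)=2-0>0$, consistent with the numerical value $23/8$.
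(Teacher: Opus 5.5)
Your reduction breaks at the claim that the central value(s) of $X$ are fully saved. The identity you reused drops a term. From $S_k/2=2\sum_{j}U(m,j)$, $W_{<n}(k)=4\sum_{j>x}U(m,j)$ and $N(k,n)=4U(m,x)$ one gets
\[
V_n(k)-N(k,n)=2\Bigl(\sum_{j<x}U(m,j)-\sum_{j\ge x}U(m,j)\Bigr),
\]
which has an extra $-2U(m,x)$ compared with what you wrote. The display in the proof of \Cref{lem:tail} has the same off-by-one. That lemma is unaffected, because its termwise comparison $U(m,x-1-r)\ge U(m,x+r)$ already includes $r=0$.

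At the center the extra term is decisive. Since $U(3t,t)=C_t$ and $U(3t-1,t)=C_t/2$, you get $V_n(k)-N(k,n)=2(D_t-C_t)<0$ for $n=2t+1$, and $2(B_t-C_t/2)=-2D_t<0$ for $n=2t$. Hence $g_{2t+1}(t)=\tfrac12+\tfrac{D_t}{2C_t}$ and $g_{2t}(t)=1-\tfrac{D_t}{C_t}$, both strictly below $1$; for instance $g_2(1)=6/8$ and $g_3(1)=10/16$. Your own sanity check catches this. With $g_2(1)=1$ and $g_2(0)=1/2$ you would get $\EE[L_2]=\tfrac72-\tfrac34=\tfrac{11}{4}\ne\tfrac{23}{8}$. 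Your odd-case bound would force $\PP(I_3=1)\ge\tfrac34$, whereas the numerical Remark gives $\PP(I_3=1)=5-\tfrac{71}{16}=\tfrac{9}{16}$.

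For odd $n$ the repair is cheap. $D_t>0$ still gives $g_n(t)>\tfrac12$, and your pairing yields $\PP(I_n=1)\ge\tfrac12+p_n(t)\tfrac{D_t}{2C_t}$.

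For even $n=2t$ the error changes the argument. The upper half $\{X\ge t\}$ contributes only $\tfrac12-p_n(t)\tfrac{D_t}{C_t}<\tfrac12$, so you do not even get the non-strict bound. Showing $g_{2t}(t-1)>0$ is therefore not enough; you need $g_{2t}(t-1)>D_t/C_t$. What rescues it is that your ``main obstacle'' difference equals exactly $D_t$:
\begin{itemize}
\item Summing $U(m+1,j)=U(m,j)+2U(m-1,j-1)$ over the defining ranges gives $D_t=B_t+2\bigl(\sum_{j\le t-2}U(3t-2,j)-\sum_{j\ge t}U(3t-2,j)\bigr)$.
\item Combining this with $B_t=\tfrac12 C_t-D_t$ and $U(3t-2,t-1)=C_t/4$ gives $V_{2t}(3t)=2D_t$ and $N(3t,2t)=C_t$, so $g_{2t}(t-1)=2D_t/C_t$.
\item Since $p_n(t-1)=p_n(t)$, pairing $t-1$ with $t$ gives $g_{2t}(t-1)+g_{2t}(t)=1+D_t/C_t>1$, hence $\PP(I_{2t}=1)\ge\tfrac12+p_n(t)\tfrac{D_t}{C_t}$.
\end{itemize}
The missing idea is this compensation across the adjacent pair $(t-1,t)$, not full saturation of the upper half. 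No new Lagrange-inversion computation for $m\equiv1\pmod 3$ is needed; the Pascal-type recurrence reduces everything to $D_t$ and $B_t$.
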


\begin{proof}
For $n=1$ we have $K_1=2$ and the four symbols are equiprobable; by \Cref{prop:dichotomy} exactly two receive length $1=K_1-1$, hence $\PP(I_1=1)=1/2$.

Assume now $n\ge2$. By \Cref{lem:cond},
\[
\PP(I_n=1)=\sum_{x=0}^{n-1}p_n(x)g_n(x),
\qquad
p_n(x)=\binom{n-1}{x}2^{-(n-1)}.
\]
We pair $x$ with $n-1-x$.

\smallskip
\noindent\emph{Odd case: $n=2t+1$ ($t\ge1$).}
For $x<t$, its partner $y=2t-x$ satisfies $y\ge t+1>(2t+2)/2$, hence $g_n(y)=1$ by \Cref{lem:tail}. For the central value $x=t$, we have $k=3t+2$,
\[
W_{<n}(k)=4\sum_{j\ge t+1}U(3t,j),
\qquad
\frac{S_k}{2}=2\sum_{j\ge0}U(3t,j),
\qquad
N(k,n)=4C_t.
\]
Therefore
\[
V_n(k)=2\left(C_t+D_t\right),
\]
so by \Cref{lem:cond} and \Cref{lem:central},
\[
g_n(t)=\frac{V_n(k)}{N(k,n)}=\frac12+\frac{D_t}{2C_t}\in\Bigl(\frac12,1\Bigr).
\]
Hence
\[
\PP(I_n=1)
\ge \sum_{x=0}^{t-1}p_n(x)+p_n(t)g_n(t)
=\frac12+p_n(t)\frac{D_t}{2C_t}
>\frac12.
\]

\smallskip
\noindent\emph{Even case: $n=2t$ ($t\ge1$).}
For $x\le t-2$, its partner $y=2t-1-x$ satisfies $y\ge t+1>(2t+1)/2$, hence $g_n(y)=1$ by \Cref{lem:tail}. Only the adjacent pair $(t-1,t)$ remains.

For $x=t$, we have $k=3t+1$, and \Cref{lem:central} gives
\[
W_{<n}(k)=4\sum_{j\ge t+1}U(3t-1,j),
\qquad
N(k,n)=4U(3t-1,t)=2C_t,
\]
so
\[
g_{2t}(t)=\frac12+\frac{B_t}{C_t}=1-\frac{D_t}{C_t}\in\Bigl(\frac12,1\Bigr).
\]
For $x=t-1$, set
\[
A_t:=\sum_{j=0}^{t-2}U(3t-2,j)-\sum_{j\ge t}U(3t-2,j).
\]
Using the identity $U(m+1,j)=U(m,j)+2U(m-1,j-1)$ and summing over the defining ranges gives
\[
D_t=2A_t+B_t.
\]
Since $N(3t,2t)=4U(3t-2,t-1)=C_t$, we obtain
\[
g_{2t}(t-1)=\frac12+\frac{2A_t}{C_t}
=\frac12+\frac{D_t-B_t}{C_t}
=\frac{2D_t}{C_t}\in(0,1).
\]
Therefore
\[
g_{2t}(t-1)+g_{2t}(t)=1+\frac{D_t}{C_t}>1.
\]
Using $p_n(t-1)=p_n(t)$, we conclude that
\[
\PP(I_n=1)
\ge \sum_{x=0}^{t-2}p_n(x)+p_n(t)\bigl(g_{2t}(t-1)+g_{2t}(t)\bigr)
=\frac12+p_n(t)\frac{D_t}{C_t}
>\frac12.
\]
This proves both the lower bound and strictness for every $n\ge2$.
\end{proof}

The average-length consequences are now immediate.

\begin{corollary}[Uniform benchmark improvement]\label{cor:uniform}
For every $n\ge1$,
\[
\EE[|C(X_1^n)|]\le \frac32\,n,
\]
with equality if and only if $n=1$. Equivalently,
\[
\EE[|C(X_1)|]=\frac32,
\qquad
\EE[|C(X_1^n)|]<\frac32\,n\quad (n\ge2).
\]
\end{corollary}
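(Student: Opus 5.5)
The corollary follows by combining the decomposition \eqref{eq:Ln-decomp} with \Cref{lem:key}. The plan is first to justify the identity $\EE[K_n]=\frac32 n+\frac12$ used there, then to substitute the bounds on the saving probability.

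\textbf{Computing $\EE[K_n]$.} We have $K_n=(n+1)+X$ with $X\sim\mathrm{Bin}(n-1,\tfrac12)$. Hence
\[
\EE[K_n]=(n+1)+\frac{n-1}{2}=\frac32 n+\frac12 .
\]
This is the first step; it needs only linearity of expectation and the i.i.d.\ Bernoulli structure of the $Y_i$ established before \Cref{lem:Nkl}. Since $K_n=-\log_2\PP(X_1^n)$, this value is also the block entropy, but that interpretation is not needed.

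\textbf{Substituting into \eqref{eq:Ln-decomp}.} This gives
\[
\EE[|C(X_1^n)|]=\frac32 n+\frac12-\PP(I_n=1).
\]
\Cref{lem:key} gives $\PP(I_n=1)\ge\frac12$ for all $n\ge1$, so $\EE[|C(X_1^n)|]\le\frac32 n$. For $n\ge2$ the strict inequality in \Cref{lem:key} yields $\EE[|C(X_1^n)|]<\frac32 n$.

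\textbf{The case $n=1$.} \Cref{lem:key} shows that $\PP(I_1=1)=\tfrac12$ exactly. Alternatively, one can check directly from \Cref{prop:dichotomy} with $S_2=4$: two symbols get length $1$ and two get length $2$, so the average length is $\tfrac32$. Thus equality holds if and only if $n=1$.

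\textbf{Obstacles.} There is no real obstacle here; all the difficulty sits in \Cref{lem:key} and \Cref{prop:dichotomy}. The only point to check is that \eqref{eq:Ln-decomp} is legitimate, namely that $L_n=K_n-I_n$ holds pointwise. This follows from the length dichotomy.
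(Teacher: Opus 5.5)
Your proposal is correct and follows the paper's proof: substitute $\EE[K_n]=\frac32 n+\frac12$ into \eqref{eq:Ln-decomp} and apply \Cref{lem:key}, which gives equality at $n=1$ and strictness for $n\ge2$. Your binomial derivation of $\EE[K_n]$ only spells out an identity the paper asserts directly as the block entropy.
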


\begin{proof}
Equation~\eqref{eq:Ln-decomp} gives
\[
\EE[L_n]=\EE[K_n]-\PP(I_n=1)=\frac32 n+\frac12-\PP(I_n=1).
\]
Now apply \Cref{lem:key}.
\end{proof}

A slightly more careful reading of the proof already yields the first-order asymptotic gain.

\begin{corollary}[Asymptotic $n^{-1/2}$ improvement]\label{cor:asymp}
Let $L_n:=|C(X_1^n)|$. Then, as $t\to\infty$,
\[
\frac32(2t+1)-\EE[L_{2t+1}]
\ge
\Bigl(\frac{1}{18\sqrt\pi}+o(1)\Bigr)t^{-1/2}
=
\Bigl(\frac{\sqrt2}{18\sqrt\pi}+o(1)\Bigr)(2t+1)^{-1/2},
\]
and
\[
\frac32(2t)-\EE[L_{2t}]
\ge
\Bigl(\frac{1}{9\sqrt\pi}+o(1)\Bigr)t^{-1/2}
=
\Bigl(\frac{\sqrt2}{9\sqrt\pi}+o(1)\Bigr)(2t)^{-1/2}.
\]
In particular, for every fixed
\[
0<c<\frac{\sqrt2}{18\sqrt\pi}
\]
there exists $n_0(c)\in\mathbb N$ such that for all $n\ge n_0(c)$,
\[
\EE[|C(X_1^n)|]\le \frac32\,n-\frac{c}{\sqrt n}.
\]
\end{corollary}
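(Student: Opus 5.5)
The plan is to extract the asymptotics directly from the lower bounds in the proof of \Cref{lem:key}, combined with \eqref{eq:Ln-decomp}. By \Cref{cor:uniform}'s computation,
\[
\tfrac32 n-\EE[L_n]=\PP(I_n=1)-\tfrac12 .
\]
The proof of \Cref{lem:key} gives two lower bounds on this quantity. In the odd case $n=2t+1$ it is at least $p_{2t+1}(t)\,D_t/(2C_t)$. In the even case $n=2t$ it is at least $p_{2t}(t)\,D_t/C_t$. So everything reduces to two asymptotic estimates: the central binomial mass $p_n(t)$, and the ratio $D_t/C_t$.

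\textbf{Central binomial mass.} For $n=2t+1$ we have $p_n(t)=\binom{2t}{t}4^{-t}\sim 1/\sqrt{\pi t}$ by Stirling. For $n=2t$ we have $p_n(t)=\binom{2t-1}{t}2^{-(2t-1)}=\tfrac12\binom{2t}{t}4^{-t}\cdot 2=\binom{2t}{t}4^{-t}$, which is again $\sim 1/\sqrt{\pi t}$. I will double-check this last identity, since $\binom{2t-1}{t}=\tfrac12\binom{2t}{t}$.

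\textbf{The ratio $D_t/C_t$.} I use the alternating-sum representation from \Cref{lem:central},
\[
D_t=\sum_{s=0}^{t-1}(-1)^sC_{t-1-s}.
\]
Since $C_{r}/C_{r-1}=8-4/r\to8$, each term $C_{t-1-s}/C_t\to 8^{-(s+1)}$ for fixed $s$. The terms are dominated by $(2/7)^{s}$-type geometric bounds, because the ratio is at least $4$ for all $r\ge1$. Dominated convergence then gives
\[
\frac{D_t}{C_t}\to\sum_{s\ge0}(-1)^s8^{-(s+1)}=\frac{1/8}{1+1/8}=\frac19 .
\]
As a cross-check, the same limit follows from singularity analysis of $D(z)=z/((1+z)\sqrt{1-8z})$ versus $C(z)+1=1/\sqrt{1-8z}$ at $z=1/8$: the factor $z/(1+z)$ evaluated there is $1/9$.

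\textbf{Assembling the bounds.} The odd bound is $\frac{1}{\sqrt{\pi t}}\cdot\frac{1}{18}(1+o(1))$. The even bound is $\frac{1}{\sqrt{\pi t}}\cdot\frac19(1+o(1))$. Rewriting with $t=(n-1)/2$ or $t=n/2$ gives $t^{-1/2}\sim\sqrt2\,n^{-1/2}$, which yields the two stated constants. The final claim follows because the odd constant $\sqrt2/(18\sqrt\pi)$ is the smaller of the two. So for $c$ below it, both subsequences eventually exceed $c/\sqrt n$. One then takes $n_0(c)$ as the maximum of the two thresholds.

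\textbf{Main obstacle.} The only delicate point is justifying the limit $D_t/C_t\to1/9$ rigorously, i.e.\ the uniform domination of the tail of the alternating sum. I would handle it with the crude bound $C_{r-1}\le C_r/4$ for $r\ge1$, which gives $C_{t-1-s}/C_t\le 4^{-(s+1)}$, and then apply dominated convergence. Everything else is Stirling and bookkeeping.
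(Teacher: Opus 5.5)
Your proposal is correct and is essentially the paper's proof: you combine $\tfrac32 n-\EE[L_n]=\PP(I_n=1)-\tfrac12$ with the odd and even lower bounds from the proof of \Cref{lem:key}, the identity $p_{2t+1}(t)=p_{2t}(t)=\binom{2t}{t}4^{-t}\sim(\pi t)^{-1/2}$, and the limit $D_t/C_t\to 1/9$ via dominated convergence. The stray ``$(2/7)^s$'' remark is unnecessary; the bound $C_{t-1-s}/C_t\le 4^{-(s+1)}$, which follows from $C_r/C_{r-1}=8-4/r\ge 4$, is exactly the domination needed.
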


\begin{proof}
Let
\[
q_t:=\binom{2t}{t}4^{-t}.
\]
Then
\[
p_{2t+1}(t)=q_t,
\qquad
p_{2t}(t)=\binom{2t-1}{t}2^{-(2t-1)}=q_t.
\]
By Stirling's formula,
\[
q_t=\frac{1}{\sqrt{\pi t}}\,(1+o(1)).
\]
Moreover, \Cref{lem:central} gives
\[
\frac{D_t}{C_t}
=\sum_{j=1}^{t}(-1)^{j-1}\frac{C_{t-j}}{C_t}
\longrightarrow
\sum_{j\ge1}(-1)^{j-1}8^{-j}
=\frac19,
\]
because for each fixed $j$, $C_{t-j}/C_t\to 8^{-j}$, and the summands are dominated by a geometric sequence.

In the odd case, the proof of \Cref{lem:key} gives
\[
\PP(I_{2t+1}=1)\ge \frac12+q_t\frac{D_t}{2C_t},
\]
hence
\[
\frac32(2t+1)-\EE[L_{2t+1}]
=\PP(I_{2t+1}=1)-\frac12
\ge q_t\frac{D_t}{2C_t}
=\Bigl(\frac{1}{18\sqrt\pi}+o(1)\Bigr)t^{-1/2}.
\]
In the even case,
\[
\PP(I_{2t}=1)\ge \frac12+q_t\frac{D_t}{C_t},
\]
so
\[
\frac32(2t)-\EE[L_{2t}]
=\PP(I_{2t}=1)-\frac12
\ge q_t\frac{D_t}{C_t}
=\Bigl(\frac{1}{9\sqrt\pi}+o(1)\Bigr)t^{-1/2}.
\]
The final statement follows since the odd lower bound is the smaller of the two asymptotic constants.
\end{proof}

\section*{Acknowledgements}
ChatGPT-5.4 provided the main ideas used in the proof of \Cref{lem:central}.

\end{document}